\newtheorem{theorem}{Theorem}[section]
\newtheorem{corollary}[theorem]{Corollary}
\newcommand{\cost}{\text{cost}}
\newcommand{\MAX}{\text{\normalfont \texttt{MAX}}}
\newcommand{\AVG}{\text{\normalfont \texttt{AVG}}}
\newcommand{\costA}{\ensuremath (\AVG \circ \MAX)}
\newcommand{\costB}{\ensuremath (\MAX \circ \AVG)}
\title{\bf Tight Distortion Bounds for Distributed \\ Single-Winner Metric Voting on a Line}
\author{Alexandros A. Voudouris}
\date{School of Computer Science and Electronic Engineering, University of Essex}
\begin{document}

\sloppy
\allowdisplaybreaks

\maketitle

\begin{abstract}
    We consider the distributed single-winner metric voting problem on a line, where agents and alternative are represented by points on the line of real numbers, the agents are partitioned into disjoint districts, and the goal is to choose a single winning alternative in a decentralized manner. In particular, the choice is done by a distributed voting mechanism which first selects a representative alternative for each district of agents and then chooses one of these representatives as the winner. In this paper, we design simple distributed mechanisms that achieve distortion at most $2+\sqrt{5}$ for the average-of-max and the max-of-average social cost objectives, matching the corresponding lower bound shown in previous work for these objectives. 
\end{abstract}

\section{Introduction and Model}
We consider the following voting problem. An instance $I$ consists of a set $N$ of $n$ {\em agents} and a set $A$ of $m$ {\em alternatives}, all of whom are represented by points on the line of real numbers. For any agent $i \in N$ and alternative $x \in A$, let $\delta(i,x)$ be the {\em distance} between $i$ and $x$ on the line (which is equal to the absolute difference between their positions). The agents are also partitioned into a set $D$ of $k$ districts, such that each district contains at least one agent. We denote by $N_d$ the set of agents of district $d\in D$, and by $n_d = |N_d|$ the size of $d$. The goal is to choose an alternative with good social efficiency guarantees using a distributed mechanism that takes as input {\em ordinal information} about the instance, such as the ordering of agents and alternatives on the line, and the ordinal preferences of the agents over the alternatives. To be specific, the {\em preference} of an agent $i$ over the alternatives is a linear ordering of the alternatives such that alternative $x$ is ranked higher than another alternative $y$ if $\delta(i,x) \leq \delta(i,y)$, breaking ties arbitrarily but consistently.

In general, a {\em distributed mechanism} $M$ works as follows:
\begin{itemize}
    \item Step 1: For each district $d\in D$, given the preferences of the agents in $N_d$ over the alternatives and their relative ordering on the line, $M$ decides a {\em representative} alternative $y_d \in A$ for $d$. 
    \item Step 2: Given the representatives of all districts and their relative ordering on the line, $M$ outputs one of them as the overall winner $M(I) \in \bigcup_{d \in D}\{y_d\}$. 
\end{itemize}
Clearly, different distributed mechanisms can be designed by changing the method used for deciding the representative alternatives of the districts or the method for choosing a representative as the overall winner. 

Interpreting the distance $\delta(i,x)$ as the {\em individual cost} of agent $i$ for alternative $x$, we measure the social efficiency of $x$ by some function of the distances of all agents from $x$. There are many well-known social efficiency objectives that have been studied in various social choice problem, such as the {\em social cost} (total or average distance of all agents) and the {\em max cost} (maximum distance among all agents). 
In the context of metric distributed voting, \citet{AFV22} considered social objectives that are composed by some function that is applied over the districts and some function that is applied within the districts. In particular, they focused on the following four objectives. 
\begin{itemize}
\item The {\em average-of-average} cost of $x$ is the average over each district of the average individual cost of the agents therein:
$$(\AVG \circ \AVG)(x) = \frac{1}{k} \sum_{d \in D} \bigg( \frac{1}{n_d} \sum_{i \in N_d} \delta(i,x) \bigg).$$

\item 
The {\em max} cost of $x$ is the max individual cost over all agents: 
$$(\MAX \circ \MAX)(x) = \max_{d \in D} \max_{i \in N_d} \delta(i,x) = \max_{i \in N} \delta(i,x).$$

\item 
The {\em average-of-max} cost of $x$ is the average over each district of the max individual cost therein:
$$(\AVG \circ \MAX)(x) = \frac{1}{k} \sum_{d \in D} \max_{i \in N_d} \delta(i,x).$$

\item
The {\em max-of-average} cost of $x$ is the max over each district of the average individual cost therein:  
$$(\MAX \circ \AVG)(x) = \max_{d \in D} \bigg\{ \frac{1}{n_d} \sum_{i \in N_d} \delta(i,x) \bigg\}.$$
\end{itemize}

We measure the efficiency of a mechanism $M$ with respect to a social objective $f$ (such as the ones defined above) by its {\em distortion}, the worst-case ratio (over all possible instances) of the $f$-value of the alternative chosen by the mechanism over the minimum possible $f$-value among all alternatives: 
\begin{align*}
    \sup_{I=(N,A,D)} \frac{f(M(I))}{\min_{x \in A}f(x)}
\end{align*}
By definition, the distortion of any mechanism is at least $1$. We aim to design distributed mechanisms with an as low distortion as possible. 

\subsection{Our Contribution}
In previous work, \citet{AFV22} considered the problem of distributed metric voting where agents and alternatives are in some arbitrary metric space. By carefully composing centralized voting mechanisms for making decisions within and over the districts, they designed distributed mechanisms with distortion guarantees for general metric spaces and with respect to the aforementioned objectives as well as more general ones. For the special case of a line metric (which is our focus here), they showed that the distortion of distributed mechanisms that use ordinal information is exactly $7$ with respect to the average-of-average cost, exactly $3$ with respect to the max cost, and in the interval $[2+\sqrt{5},5]$ with respect to the average-of-max or the max-of-average cost objectives. 

Inspired by the recent work of \citet{continuous} who designed distributed mechanisms with tight distortion bounds for the continuous distributed facility location problem (where each point on the line of real numbers can be considered as an alternative), we resolve the distortion of distributed mechanisms for voting on a line with respect to the average-of-max and the max-of-average costs. In particular, we design two essentially symmetric distributed mechanism that achieve a tight distortion bound of $2+\sqrt{5}$ for the average-of-max cost and the max-of-average cost. For the average-of-max cost, our distributed mechanism chooses as the favorite alternative of the rightmost agent in each district as the district representative, and then the $(\alpha\cdot k)$-the leftmost alternative as the overall winner. For the max-of-average cost, our distributed mechanism chooses the favorite alternative of the $(\alpha\cdot n_d)$-th agent in each district $d$ as the district representative, and then the rightmost alternative as the overall winner. We show that both mechanisms achieve distortion at most $\max\left\{\frac{3-\alpha}{1-\alpha}, \frac{2}{\alpha}-1\right\}$ for the corresponding objectives; for $\alpha = \frac{3-\sqrt{5}}{2}$, the two terms in the bound balance out to $2+\sqrt{5}$.

\subsection{Other Related Work}
Since its definition by \citet{procaccia2006distortion}, the distortion of voting mechanisms has been studied extensively for several settings under different assumptions about the preferences of the agents. The most well-studied setting is that of single-winner voting that has been considered under the premise of normalized 
 agent valuations~\citep{boutilier2015optimal,ebadian2022optimized,amanatidis2020peeking} as well as metric preferences~\citep{anshelevich2018approximating,gkatzelis2020resolving,kempe2022veto}. Several other models have also been considered, such as multi-winner voting~\citep{caragiannis2017subset,CSV22}, participatory budgeting~\citep{benade2017preference}, and matching~\citep{Aris14,amanatidis2021matching}.

The distortion of distributed voting was first considered by \citep{ratsikas2020distributed} who focused on bounding the distortion of max-weight mechanisms with respect to the social welfare when agents have normalized valuations for the alternatives. \citet{FV21} then studied the distortion of mechanisms for a distributed facility location setting where the agents are positioned on a line and the goal is to choose a single location from a set of alternative ones, which can be discrete (finite) or continuous (infinite). The discrete variant was studied further and generalized to arbitrary metric spaces by \citet{AFV22}, who also introduced and studied the average-of-max and max-of-average costs for the first time. The distortion of distributed mechanisms for the continuous variant on a line was recently resolved by \citet{continuous}. We refer the reader to the recent survey of \citet{distortion-survey} for more details on the distortion of voting mechanisms in different models, and to the survey of \citet{facility-survey} for details on related facility location models.

\section{Average-of-Max Cost} \label{sec:AoM}
Let $\alpha \in [0,1]$ be a parameter. We consider the $\alpha$-Leftmost-of-Rightmost, which works as follows. For each district $d \in D$, the mechanism chooses the favorite alternative of the rightmost agent in $d$ as the representative $y_d$. Afterwards, it chooses the $(\alpha\cdot k)$-th leftmost alternative as the overall winner. See Mechanism~\ref{mech:LoR}. 

\SetCommentSty{mycommfont}
\begin{algorithm}[ht]
\SetNoFillComment
\caption{\sc $\alpha$-Leftmost-of-Rightmost}
\label{mech:LoR}
\For{each district $d\in D$}
{
    $y_d := $ favorite alternative of the rightmost agent in $N_d$\;
}
\Return $w := (\alpha \cdot k)$-th leftmost representative\; 
\end{algorithm}

\begin{theorem}
For average-of-max, the distortion of the {\sc $\alpha$-Leftmost-of-Rightmost} mechanism is at most $\max\left\{\frac{3-\alpha}{1-\alpha}, \frac{2}{\alpha}-1\right\}$.
%$2+\sqrt{5}$.
\end{theorem}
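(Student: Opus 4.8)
The plan is to fix a cost-minimizing alternative $o$ and the winner $w$ returned by the mechanism, write $\mathrm{OPT} := (\AVG \circ \MAX)(o)$, and bound $(\AVG \circ \MAX)(w)/\mathrm{OPT}$ by splitting on the two cases $o \le w$ and $w \le o$. These are genuinely different because the mechanism is asymmetric: it reads the \emph{rightmost} agent of each district but takes the $(\alpha k)$-th \emph{leftmost} representative. The engine of the argument is the defining property of the representatives: since $y_d$ is the favorite alternative of the rightmost agent $r_d$ of district $d$, we have $\delta(r_d, y_d) \le \delta(r_d, o)$, which on the line forces $r_d$ onto the $y_d$-side of the midpoint of $y_d$ and $o$ --- concretely $r_d \ge \tfrac{o+y_d}{2}$ when $y_d \ge o$, and $r_d \le \tfrac{o+y_d}{2}$ when $y_d \le o$. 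Because $w$ is the $(\alpha k)$-th leftmost representative, at least $(1-\alpha)k$ representatives satisfy $y_d \ge w$ and at least $\alpha k$ satisfy $y_d \le w$; this count is exactly what converts the midpoint property into a lower bound on $\mathrm{OPT}$.

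First I will establish $|w-o| \le \tfrac{2\,\mathrm{OPT}}{1-\alpha}$ when $o \le w$, and $|w-o| \le \tfrac{2\,\mathrm{OPT}}{\alpha}$ when $w \le o$. In the first case I use the $\ge (1-\alpha)k$ districts with $y_d \ge w \ge o$: the midpoint property gives $r_d \ge \tfrac{o+w}{2}$, so the max cost of $o$ in such a district is at least $\delta(r_d, o) \ge \tfrac{w-o}{2}$, and averaging over these districts yields the bound. The second case is symmetric, using the $\ge \alpha k$ districts with $y_d \le w \le o$, for which $r_d \le \tfrac{o+w}{2}$ and hence $\delta(r_d, o) \ge \tfrac{o-w}{2}$.

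Then I will bound $(\AVG \circ \MAX)(w)$ in each case and combine. When $o \le w$ I will simply use the triangle inequality $\max_{i \in N_d}\delta(i,w) \le \max_{i \in N_d}\delta(i,o) + |w-o|$ district by district; averaging gives $(\AVG \circ \MAX)(w) \le \mathrm{OPT} + |w-o| \le \big(1 + \tfrac{2}{1-\alpha}\big)\mathrm{OPT} = \tfrac{3-\alpha}{1-\alpha}\,\mathrm{OPT}$. The case $w \le o$ is where the main work lies and where the crude triangle inequality loses too much (it would give $\tfrac{2}{\alpha}+1$ rather than $\tfrac{2}{\alpha}-1$). Here I split the districts: for the $(1-\alpha)k$ districts with $y_d > w$ I again use the triangle inequality, paying an extra $|w-o|$ each; but for the $\ge \alpha k$ districts with $y_d \le w$ I show the cost does \emph{not} increase, i.e.\ $\max_{i \in N_d}\delta(i,w) \le \max_{i \in N_d}\delta(i,o)$. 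This is the crux: from $r_d \le \tfrac{o+w}{2}$, every agent $p$ of such a district satisfies $p \le \tfrac{o+w}{2}$, and since $w \le o$ this yields $\delta(p,w) \le \delta(p,o)$ pointwise, hence also for the maxima. Only the $(1-\alpha)k$ right districts then contribute the penalty, giving $(\AVG \circ \MAX)(w) \le \mathrm{OPT} + (1-\alpha)|w-o| \le \big(1 + \tfrac{2(1-\alpha)}{\alpha}\big)\mathrm{OPT} = \tfrac{2-\alpha}{\alpha}\,\mathrm{OPT} = \big(\tfrac{2}{\alpha}-1\big)\mathrm{OPT}$.

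Taking the worse of the two cases gives the claimed $\max\{\tfrac{3-\alpha}{1-\alpha}, \tfrac{2}{\alpha}-1\}$. The main obstacle I anticipate is twofold. First, getting the refined per-district bound in the case $w \le o$ exactly right: recognizing that the favorite property constrains the rightmost agent and therefore bounds the \emph{entire} left district, which is precisely why those districts cost no more under $w$, whereas the analogous claim fails when $o \le w$ because the \emph{left} endpoints of the right districts are unconstrained (so there I must settle for the triangle inequality, which fortunately already suffices for $\tfrac{3-\alpha}{1-\alpha}$). Second, handling the integrality and tie-breaking of the ``$(\alpha k)$-th leftmost'' selection, for which I will use weak inequalities throughout and the standard rounding conventions so that the district counts $\alpha k$ and $(1-\alpha)k$ hold as stated.
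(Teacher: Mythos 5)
Your proposal is correct and follows essentially the same route as the paper's proof: the same case split on the order of $o$ and $w$, the same use of the rightmost agent's preference for $y_d$ (the midpoint property) over $(1-\alpha)k$ right districts to bound $\delta(w,o)$ in Case~1 and over $\alpha k$ left districts in Case~2, and the same refinement in Case~2 that the left districts incur no extra cost under $w$, yielding $\frac{2}{\alpha}-1$ rather than $\frac{2}{\alpha}+1$. No substantive differences to report.
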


\begin{proof}
Let $w$ be the alternative chosen by the mechanism when given as input an arbitrary instance, and $o$ the optimal alternative. For each district $d$, let $i_d$ be the most distant agent from $w$, and $i_d^*$ the most distant agent from $o$. So, $\costA(w) = \frac{1}{k} \sum_{d \in D} \delta(i_d,w)$, and $\costA(o) = \frac{1}{k} \sum_{d \in D} \delta(i_d^*,o)$. Also, let $\ell_d$ and $r_d$ denote the leftmost and rightmost agents in $d$, respectively. Note that $i_d, i_{d^*} \in \{\ell_d, r_d\}$. We consider the following two cases depending on the relative positions of $w$ and $o$.

\paragraph{Case 1: $o < w$.}
By the triangle inequality, and since $\delta(i_d,o) \leq \delta(i_d^*,o)$, we have
\begin{align}
\costA(w) 
&= \frac{1}{k} \sum_{d \in D} \delta(i_d,w) \nonumber \\
&\leq \frac{1}{k} \sum_{d \in D} \bigg( \delta(i_d,o) + \delta(w,o) \bigg) \nonumber\\
&\leq \costA(o) + \delta(w,o). \label{eq:AoM-1}
\end{align}
Now, let $S$ be the set of representatives that are to the right of $w$. Since $w$ is by definition the $(\alpha \cdot k)$-th leftmost representative, we have that $|S| \geq (1-\alpha)\cdot k$. For every $d$ such that $y_d\in S$, since $o < w \leq y_d$, agent $r_d$ is closer to $w$ than to $o$, and thus $\delta(r_d,o) \geq \frac12 \delta(w,o)$. Hence, 
\begin{align*}
\costA(o) &\geq \frac{1}{k} \sum_{d \in S} \delta(r_d,o) \\
&\geq \frac{1}{k} \cdot |S|\cdot \frac{\delta(w,o)}{2} \\
&\geq \frac{1-\alpha}{2} \cdot \delta(w,o),
\end{align*}
or, equivalently,
\begin{align} \label{eq:AoM-2}
\delta(w,o) \leq \frac{2}{1-\alpha} \cdot \costA(o).
\end{align}
Hence, by \eqref{eq:AoM-1} and \eqref{eq:AoM-2}, we obtain
\begin{align*}
\costA(w) 
&\leq \left( 1+\frac{2}{1-\alpha}\right) \cdot \cost(o)
= \frac{3-\alpha}{1-\alpha} \cdot \costA(o).
\end{align*}

\paragraph{Case 2: $w < o$.}
Let $L$ be a set of $\alpha \cdot k$ districts from the one with the leftmost representative until the one with the $(\alpha\cdot k)$-th representative (which is $w$); denote by $R$ the set of the remaining $(1-\alpha)\cdot k$ districts. 
Observe that:
\begin{itemize}
\item For every $d \in L$, since $y_d$ is the alternative that is closest to $r_d$ and $y_d \leq w < o$, both $r_d$ and $\ell_d$ are closer to $w$ than to $o$. So, $\delta(i_d,w) \leq \delta(i_d^*,o)$. 

\item For every $d \in R$, since $\delta(i_d,o) \leq \delta(i_d^*,o)$ by the definition of $i_d^*$, using the triangle inequality, we obtain $\delta(i_d,w) \leq \delta(i_d,o) + \delta(w,o)\leq \delta(i_d^*,o) + \delta(w,o)$. 
\end{itemize}
Hence, 
\begin{align}
\costA(w) &= \frac{1}{k} \sum_{d \in D} \delta(i_d,w) \nonumber \\
&= \frac{1}{k}\sum_{d \in L} \delta(i_d,w) + \frac{1}{k} \sum_{d \in R} \delta(i_d,w) \nonumber \\
&\leq \frac{1}{k} \sum_{d \in L} \delta(i_d^*,o)
+ \frac{1}{k} \sum_{d \in R} \bigg( \delta(i_d^*,o) + \delta(w,o) \bigg) \nonumber \\
&= \costA(o) + \frac{|R|}{k} \cdot \delta(w,o). \label{eq:AoM-3}
\end{align}
Since $r_d$ is closer to $w$ than to $o$ for every $d \in L$, we also have that
\begin{align*}
\costA(o) \geq \frac{1}{k} \sum_{d \in L} \delta(r_d,o) \geq \frac{|L|}{2k} \delta(w,o),
\end{align*}
or, equivalently, 
\begin{align} \label{eq:AoM-4}
\delta(w,o) \leq \frac{2k}{|L|} \cdot \costA(o).
\end{align}
Therefore, by \eqref{eq:AoM-3} and \eqref{eq:AoM-4}, we obtain
\begin{align*}
\costA(w) \leq \costA(o) + 2\frac{|R|}{|L|} \cdot\costA(o) = 
\bigg(\frac{2}{\alpha} -1 \bigg) \cdot \costA(o).
\end{align*}

Putting everything together, we obtain an upper bound of $\max\left\{\frac{3-\alpha}{1-\alpha}, \frac{2}{\alpha}-1\right\}$.
\end{proof}

Observe that the bound $\max\left\{\frac{3-\alpha}{1-\alpha}, \frac{2}{\alpha}-1\right\}$ consists of two functions of $\alpha$, one that is non-decreasing and one that is non-increasing in $\alpha$. To minimize the maximum between the two, we need to find the value of $\alpha$ for which the two functions intersect. So, we need to solve the equation
\begin{align*}
\frac{3-\alpha}{1-\alpha} = \frac{2}{\alpha}-1 \Leftrightarrow
\alpha^2 - 3\alpha + 1 = 0.
\end{align*}
Since $\alpha < 1$, its solution is $\alpha = \frac{3-\sqrt{5}}{2}$. For this value of $\alpha$, both functions have value $\frac{2}{\frac{3-\sqrt{5}}{2}} - 1  = 2 + \sqrt{5}$, and we obtain the following corollary. 

\begin{corollary}
For average-of-max, the distortion of the {\sc $\frac{3-\sqrt{5}}{2}$-Leftmost-of-Rightmost} mechanism is at most $2+\sqrt{5}$.
\end{corollary}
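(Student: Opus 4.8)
The plan is to derive the corollary directly from the theorem by treating $\alpha$ as a free tuning parameter and optimizing the guarantee over it. The theorem already establishes that for every $\alpha \in [0,1]$ the $\alpha$-Leftmost-of-Rightmost mechanism has average-of-max distortion at most $g(\alpha) := \max\left\{\frac{3-\alpha}{1-\alpha}, \frac{2}{\alpha}-1\right\}$, so it suffices to pick the value of $\alpha$ that minimizes $g$ and then to compute the resulting number.

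First I would analyze the two constituent functions separately. The function $\alpha \mapsto \frac{3-\alpha}{1-\alpha}$ is non-decreasing on $[0,1)$ and blows up as $\alpha \to 1^-$, while $\alpha \mapsto \frac{2}{\alpha}-1$ is non-increasing on $(0,1]$ and blows up as $\alpha \to 0^+$. Since $g$ is the pointwise maximum of a non-decreasing and a non-increasing function, its minimum over the feasible range is attained exactly where the two curves cross: to the left of the crossing the decreasing term dominates and to the right the increasing term dominates, so there is no way to improve both simultaneously. Hence the optimal $\alpha$ is the solution of $\frac{3-\alpha}{1-\alpha} = \frac{2}{\alpha}-1$ lying in $[0,1]$.

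Next I would solve this equation. Clearing denominators reduces it to the quadratic $\alpha^2 - 3\alpha + 1 = 0$, whose roots are $\frac{3 \pm \sqrt{5}}{2}$. The key check is admissibility: the root $\frac{3+\sqrt{5}}{2} \approx 2.618$ lies outside $[0,1]$ and must be discarded, leaving the unique feasible value $\alpha = \frac{3-\sqrt{5}}{2} \approx 0.382$. Substituting into either term—say the decreasing one—gives $\frac{2}{\alpha} - 1 = \frac{4}{3-\sqrt{5}} - 1 = (3+\sqrt{5}) - 1 = 2 + \sqrt{5}$ after rationalizing the fraction, which is therefore the value of $g$ at the minimizer and the claimed bound.

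There is no real obstacle here: the statement is a one-parameter optimization of a bound that has already been proved, so the only points that require care are getting the two monotonicity directions right (which is what justifies that the crossing is the minimizer rather than some other critical configuration) and the arithmetic of selecting the admissible quadratic root and rationalizing $\frac{4}{3-\sqrt{5}}$. The mild modeling subtlety of whether $\alpha \cdot k$ is an integer for the chosen $\alpha$ is inherited from the theorem's statement and does not affect this derivation.
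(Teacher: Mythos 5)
Your proposal is correct and follows essentially the same route as the paper: observe that the bound is the maximum of a non-decreasing and a non-increasing function of $\alpha$, find their crossing point by solving $\alpha^2 - 3\alpha + 1 = 0$, discard the root $\frac{3+\sqrt{5}}{2} \notin [0,1]$, and evaluate either term at $\alpha = \frac{3-\sqrt{5}}{2}$ to get $2+\sqrt{5}$. The arithmetic checks out, so there is nothing to add.
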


%%%%%%%%%%%%%%%%%%
%%%%%%%%%%%%%%%%%%
%%%%%%%%%%%%%%%%%%
%%%%%%%%%%%%%%%%%%
%%%%%%%%%%%%%%%%%%
%%%%%%%%%%%%%%%%%%

\section{Max-of-Average Cost} \label{sec:MoA}
Let $\alpha \in [0,1]$ be a parameter. We consider the {\sc Rightmost-of-$\alpha$-Leftmost} mechanism, which works as follows. For each district $d \in D$, the mechanism chooses the favorite alternative of the $(\alpha\cdot n_d)$-th agent in $d$ as the representative $y_d$. Afterwards, it chooses the rightmost alternative as the overall winner. See Mechanism~\ref{mech:RoL}. 

\SetCommentSty{mycommfont}
\begin{algorithm}[ht]
\SetNoFillComment
\caption{\sc Rightmost-of-$\alpha$-Leftmost}
\label{mech:RoL}
\For{each district $d\in D$}
{
    $y_d :=$ favorite alternative of the $(\alpha \cdot n_d)$-th leftmost agent in $N_d$\;
}
\Return $w := $ rightmost representative\; 
\end{algorithm}

\begin{theorem}
For max-of-average, the distortion of the {\sc Rightmost-of-$\alpha$-leftmost} mechanism is at most $\max\left\{\frac{3-\alpha}{1-\alpha}, \frac{2}{\alpha}-1\right\}$. 
\end{theorem}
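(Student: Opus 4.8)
The plan is to mirror the proof of the average-of-max bound, transposing the roles of districts and agents. Where that argument averaged over districts and split the district set into a left part and a right part, here the outer operation is a \emph{maximum} over districts, so I cannot average across districts. Instead I will fix a single worst district for $w$ and split \emph{its agents} into those closer to $w$ and those closer to $o$; the $(\alpha\cdot n_d)$-th leftmost agent now plays the pivotal role that the $(\alpha\cdot k)$-th leftmost representative played before.

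Fix notation: let $w$ be the winner, which is the rightmost representative, so every representative satisfies $y_d \le w$, and let $o$ be an optimal alternative. Let $d_w$ be a district attaining the maximum in $\costB(w)$, so that $\costB(w) = \frac{1}{n_{d_w}}\sum_{i \in N_{d_w}}\delta(i,w)$, and write $a_d$ for the $(\alpha\cdot n_d)$-th leftmost agent of district $d$, whose favorite alternative is $y_d$ by construction. I will use throughout that $\costB(o) \ge \frac{1}{n_d}\sum_{i\in N_d}\delta(i,o)$ for every district $d$. In the case $o < w$, I would first get $\costB(w) \le \costB(o) + \delta(w,o)$ from the triangle inequality applied inside $d_w$. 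To bound $\delta(w,o)$ I would use the district $d^*$ whose representative equals $w$: since $w = y_{d^*}$ is the favorite of $a_{d^*}$ and $o < w$, the agent $a_{d^*}$ is weakly closer to $w$, hence lies to the right of the midpoint $\tfrac{w+o}{2}$, as do the at least $(1-\alpha)n_{d^*}$ agents weakly to its right, each at distance $\ge \tfrac{\delta(w,o)}{2}$ from $o$. Averaging gives $\costB(o) \ge \tfrac{1-\alpha}{2}\delta(w,o)$, so $\delta(w,o) \le \tfrac{2}{1-\alpha}\costB(o)$, and combining yields the bound $\tfrac{3-\alpha}{1-\alpha}$.

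In the case $w < o$ I would work entirely inside $d_w$. Partition its agents into the set $Q$ of those strictly closer to $o$ and the rest; applying the triangle inequality only on $Q$ gives $\costB(w) \le \costB(o) + \tfrac{|Q|}{n_{d_w}}\delta(w,o)$. The point is to control $|Q|$ and $\delta(w,o)$ from the \emph{same} set of agents. The pivotal agent $a_{d_w}$ satisfies $a_{d_w} \le \tfrac{w+o}{2}$, because its favorite $y_{d_w}\le w < o$ forces it to the left of the midpoint; hence the at least $\alpha\cdot n_{d_w}$ agents weakly to its left are closer to $w$, giving $\tfrac{|Q|}{n_{d_w}} \le 1-\alpha$, while each of them is at distance $\ge \tfrac{\delta(w,o)}{2}$ from $o$, giving $\delta(w,o) \le \tfrac{2}{\alpha}\costB(o)$. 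Substituting yields $1 + (1-\alpha)\cdot\tfrac{2}{\alpha} = \tfrac{2}{\alpha}-1$, and taking the maximum over the two cases gives the claim.

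The main obstacle, and where I would spend the most care, is the geometric claim that the pivotal agent ($a_{d_w}$, or $a_{d^*}$) falls on the correct side of the midpoint $\tfrac{w+o}{2}$. This is immediate when a district's own representative coincides with $w$, but in general $y_{d_w}$ may lie strictly to the left of $w$, so I would verify by a short case analysis on the relative order of $a_{d_w}, y_{d_w}, w, o$ that "$y_{d_w}$ is the favorite of $a_{d_w}$" together with $y_{d_w}\le w$ still forces $a_{d_w}\le \tfrac{w+o}{2}$ — in particular ruling out the configuration $a_{d_w} > o$ as contradicting the minimality of the distance from $a_{d_w}$ to $y_{d_w}$. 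The second delicate point is structural: because the outer maximum prevents averaging across districts, both the count bound on $|Q|$ and the distance bound on $\delta(w,o)$ must be extracted from one district, and the argument works precisely because the single agent set (those left of $a_{d_w}$) serves both purposes.
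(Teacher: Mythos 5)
Your proposal is correct and follows essentially the same route as the paper's proof: Case 1 combines the triangle inequality in the worst district with a lower bound on $\delta(w,o)$ extracted from the $(1-\alpha)n_d$ agents weakly right of the pivotal agent in the district represented by $w$, and Case 2 works inside the worst district alone, using the $\alpha n_d$ agents weakly left of its pivotal agent both to bound the number of agents needing the additive $\delta(w,o)$ term and to bound $\delta(w,o)$ itself. The only differences are cosmetic (your $d_w$/$d^*$ labels are swapped relative to the paper's, and your set $Q$ of agents strictly closer to $o$ is just the complement of the paper's set $L$ up to the same counting bound), and your midpoint argument for locating the pivotal agent is exactly the justification the paper leaves implicit.
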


\begin{proof}
Let $w$ be the alternative chosen be the mechanism when given as input an arbitrary instance, and $o$ the optimal alternative. 
For any district $d$ and alternative $x$, let $\AVG_d(x) = \frac{1}{n_d}\sum_{i \in N_d} \delta(i,x)$ be the total average distance of the agents in $d$ for alternative $x$. So, $\AVG_d(o) \leq \costB(o)$ for every district $d$. Denote by $d^*$ a district that gives the max average cost for $w$, such that $\costB(w) = \AVG_{d^*}(w)$. Also, let $d_w$ be a district represented by $w$. In addition, let $i^*$ and $i_w$ be the $(\alpha\cdot n_d)$-th leftmost agents in districts $d^*$ and $d_w$, respectively. We now switch between the following two cases.

\paragraph{Case 1: $o < w$.} 
By the definition of $d^*$ and the triangle inequality, we have
\begin{align}
\costB(w) 
&= \frac{1}{n_d} \sum_{i \in N_{d^*}} \delta(i,w) \nonumber \\
&\leq \frac{1}{n_d} \sum_{i \in N_{d^*}} \bigg( \delta(i,o) + \delta(o,w) \bigg) \nonumber \\
&\leq \costB(o) + \delta(o,w). \label{eq:MoA-1}
\end{align}
Denote by $S$ the set of agents that are positioned weakly to the right of $i_w$ in $d_w$. 
By the definition of $i_w$, $|S| \geq (1-\alpha)n_{d_w}$. Since $o < w$ and $w$ is the favorite alternative of $i_w$, all agents in $S$ are closer to $w$ than to $o$, and thus $\delta(i,o) \geq \frac12 \delta(w,o)$ for any $i \in S$. Using all these, we obtain:
\begin{align*}
\AVG_{d_w}(o) &= \frac{1}{n_{d_w}}\sum_{i \in N_{d_w}}\delta(i,o) \\
&\geq \frac{1}{n_{d_w}}\sum_{i \in S}\delta(i,o)  \\
&\geq \frac{1}{n_{d_w}} \cdot \frac{|S|}{2} \cdot \delta(w,o) \geq \frac{1-\alpha}{2} \cdot \delta(w,o),
\end{align*}
or, equivalently,
\begin{align}
\delta(w,o) \leq \frac{2}{1-\alpha} \cdot \AVG_{d_w}(o) \leq \frac{2}{1-\alpha} \cdot \costB(o). \label{eq:MoA-2}
\end{align}
Therefore, by \eqref{eq:MoA-1} and \eqref{eq:MoA-2}, we obtain
\begin{align*}
\costB(w) \leq \costB(o) + \frac{2}{1-\alpha} \cdot \costB(o) =  \frac{3-\alpha}{1-\alpha} \cdot \costB(o).
\end{align*}

\paragraph{Case 2: $w < o$.}
Let $L$ be the set of the first $\alpha\cdot n_{d^*}$ agents of $d^*$ (from the leftmost agent to $i^*$), and $R$ be the set of the remaining $(1-\alpha)n_{d^*}$ agents. 
As $w$ is the rightmost representative, $y_{d^*} \leq w < o$. Since $y_{d^*}$ is the favorite alternative of $i^*$, every agent $i \in L$ prefers $w$ over $o$, and thus $\delta(i,w) = \delta(i,o)$.  Using this in combination with the triangle inequality for every agent of $R$, we have
\begin{align}
\costB(w) 
&= \frac{1}{n_{d^*}} \sum_{i \in N_{d^*}} \delta(i,w) \nonumber \\
&= \frac{1}{n_{d^*}} \sum_{i \in L} \delta(i,w) + \frac{1}{n_{d^*}} \sum_{i \in R} \delta(i,w) \nonumber \\
&\leq \frac{1}{n_{d^*}} \sum_{i \in L} \delta(i,o) + \frac{1}{n_{d^*}} \sum_{i \in R} \bigg( \delta(i,o) + \delta(w,o) \bigg) \nonumber \\
&= \frac{1}{n_{d^*}} \sum_{i \in N_{d^*}} \delta(i,o) + \frac{|R|}{n_{d^*}} \cdot \delta(w,o) \nonumber \\
&\leq \costB(o) + \frac{|R|}{n_{d^*}} \cdot \delta(w,o). \label{eq:MoA-3}
\end{align}
Since each agent $i \in L$ prefers $w$ over $o$, we also have that $\delta(i,o) \geq \frac12 \delta(w,o)$, and thus
\begin{align*}
\costB(o) \geq \AVG_{d^*}(o) 
&= \frac{1}{n_{d^*}} \sum_{i \in N_{d^*}} \delta(i,o) \\
&\geq \frac{1}{n_{d^*}} \sum_{i \in N_{d^*}} \delta(i,o) \\ 
&\geq \frac{|L|}{2 n_{d^*}} \cdot \delta(w,o) 
\end{align*}
or, equivalently,
\begin{align}
\delta(w,o) \leq \frac{2n_{d^*}}{|L|} \cdot \costB(o). \label{eq:MoA-4}
\end{align}
Therefore, by \eqref{eq:MoA-3} and \eqref{eq:MoA-4}, we obtain
\begin{align*}
\costB(w) &\leq \costB(o) + 2\frac{|R|}{|L|} \cdot \costB(o) = \left( \frac{2}{\alpha}-1 \right) \cdot \costB(o).
\end{align*}

Putting everything together, we get an upper bound of $\max\left\{\frac{3-\alpha}{1-\alpha}, \frac{2}{\alpha}-1\right\}$.
\end{proof}

By optimizing over $\alpha$, similarly to Section~\ref{sec:AoM}, we obtain the following result.

\begin{corollary}
For max-of-average, the distortion of the {\sc Rightmost-of-$\frac{3-\sqrt{5}}{2}$-Leftmost} mechanism is at most $2+\sqrt{5}$.
\end{corollary}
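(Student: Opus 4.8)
The plan is to derive the corollary directly from the preceding theorem by optimizing the free parameter $\alpha$. The theorem establishes that for any $\alpha \in [0,1]$ the {\sc Rightmost-of-$\alpha$-Leftmost} mechanism has distortion at most $\max\left\{\frac{3-\alpha}{1-\alpha}, \frac{2}{\alpha}-1\right\}$, so it suffices to choose $\alpha$ so as to minimize this upper bound and then substitute the minimizing value.

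First I would record the monotonicity of the two competing terms. The function $g(\alpha) = \frac{3-\alpha}{1-\alpha}$ is non-decreasing on $[0,1)$ and diverges as $\alpha \to 1$, while $h(\alpha) = \frac{2}{\alpha}-1$ is non-increasing on $(0,1]$ and diverges as $\alpha \to 0$. Consequently their pointwise maximum is minimized exactly where the two curves cross, and I would locate that crossing by solving $g(\alpha) = h(\alpha)$.

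Cross-multiplying $\frac{3-\alpha}{1-\alpha} = \frac{2}{\alpha}-1$ and simplifying yields the quadratic $\alpha^2 - 3\alpha + 1 = 0$, exactly as in the optimization carried out in Section~\ref{sec:AoM}. Its two roots are $\frac{3 \pm \sqrt{5}}{2}$; since we require $\alpha \in [0,1]$, the admissible root is $\alpha = \frac{3-\sqrt{5}}{2}$. Substituting this value into $h$ gives $\frac{2}{(3-\sqrt{5})/2} - 1 = \frac{4}{3-\sqrt{5}} - 1 = (3+\sqrt{5}) - 1 = 2+\sqrt{5}$, and by the crossing property $g$ attains the same value there. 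Hence the theorem's bound equals $2+\sqrt{5}$ at this $\alpha$, which is the claim.

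As for the main obstacle: there is essentially none of substance, since the entire argument is the one-parameter optimization already performed verbatim for the average-of-max objective in Section~\ref{sec:AoM}, and it transfers unchanged because the two objectives yield the identical bound $\max\left\{\frac{3-\alpha}{1-\alpha}, \frac{2}{\alpha}-1\right\}$. The only points needing a word of care are discarding the extraneous root $\frac{3+\sqrt{5}}{2} > 1$ and confirming, via the monotonicity observation above, that the crossing point is genuinely a minimizer of the maximum rather than a boundary optimum at $\alpha \in \{0,1\}$ (both of which send the bound to infinity).
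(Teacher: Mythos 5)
Your proposal is correct and follows exactly the route the paper takes: the corollary is obtained by instantiating the theorem's bound $\max\left\{\frac{3-\alpha}{1-\alpha}, \frac{2}{\alpha}-1\right\}$ at the crossing point $\alpha = \frac{3-\sqrt{5}}{2}$ of the two monotone terms, via the quadratic $\alpha^2 - 3\alpha + 1 = 0$, just as in the optimization carried out for the average-of-max case. The arithmetic checks out, and your added remarks on discarding the root exceeding $1$ and on why the crossing point minimizes the maximum are sound.
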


\section{Open Questions}
In this paper, we showed a tight distortion bound of $2+\sqrt{5}$ with respect to the average-of-max and max-of-average cost functions for the single-winner distributed single-winner voting problem, thus completing the distortion picture with respect to the four basic objectives considered by \citet{AFV22} for the line metric. The most interesting direction for future work is to prove tight distortion bounds for general metric spaces, and also consider other social objectives or settings beyond single-winner voting. 

\bibliographystyle{plainnat}
\bibliography{references}

\end{document}